\newtheorem{Lem}{Lemma}
\newtheorem{theorem}{Theorem}
\title{Min $st$-Cut of a Planar Graph in $O(n\log\log n)$ Time}
\author{Christian Wulff-Nilsen
        \footnote{School of Computer Science,
                  Carleton University,
%                  Ottawa, Ontario, Canada, K1S 5B6,
                  \texttt{koolooz@diku.dk},
                  \texttt{http://cg.scs.carleton.ca/$_{\widetilde{~}}$cwn/}.
                  Research partially supported by NSERC and MRI.}}
\begin{document}

\maketitle
\begin{abstract}
Given a planar undirected $n$-vertex graph $G$ with non-negative edge weights, we show how to compute,
for given vertices $s$ and $t$ in $G$, a min $st$-cut in $G$ in $O(n\log\log n)$ time and $O(n)$ space. The previous best
time bound was $O(n\log n)$.
\end{abstract}

\section{Introduction}
Given a graph $G = (V,E)$ with non-negative edge weights and given vertices $s,t\in V$, an \emph{$st$-cut}
of $G$ is a partition of $V$ into two subsets $S$ and $T = V\setminus S$ such that $s\in S$ and $t\in T$. The
\emph{weight} of the $st$-cut is the sum of weights of edges starting in $S$ and ending in $T$. A \emph{min $st$-cut} of $G$
is an $st$-cut of $G$ having minimum weight.

Computing a min $st$-cut is a classical algorithmic problem with several applications in areas such as chip design, communication
networks, transportation, and cluster analysis. The problem is intimately related to another well-studied problem, that of computing
a max $st$-flow. The classical max flow min cut theorem implies that the weight of a min $st$-cut is the value of a max $st$-flow.

For general graphs, the fastest known max $st$-flow algorithm runs in time $O(mn\log(m^2/n))$, where $m$ is the number of edges
and $n$ is the number of vertices~\cite{MaxFlowGeneral}. Another implication of the max flow min cut theorem is that a min
$st$-cut can be obtained from a max $st$-flow in linear time. Hence, a min $st$-cut can also be computed in $O(mn\log(m^2/n))$ time.

For planar undirected graphs, Reif~\cite{Reif} showed how to solve the min $st$-cut problem in $O(n\log^2n)$ time. This was later
improved to $O(n\log n)$ by Frederickson~\cite{APSPPlanar}. For directed planar graphs, Borradaile and Klein~\cite{MinCutPlanar}
showed that the max $st$-flow problem can be solved in $O(n\log n)$ time and this gives an $O(n\log n)$ time min $st$-cut algorithm
for such graphs.

In this paper, we give a min $st$-cut algorithm for planar undirected graph with $O(n\log\log n)$ running time and $O(n)$ space
requirement, thereby
improving the time bound by Frederickson~\cite{APSPPlanar}. In order to achieve this, we do not depart from Reif's
approach~\cite{Reif}. Instead we speed it up using a two-phase approach. The first phase runs a ``coarse'' version of
Reif's algorithm which only determines a subset of the min $st$-cut candidates found by the original algorithm. We obtain a
running time of $O(n\log\log n)$ for this phase using the fast Dijkstra variant of Fakcharoenphol and Rao~\cite{Fakcharoenphol}.
In the second phase, the remaining min $st$-cut candidates are found exactly as in the algorithm by Reif but since the first phase
partitions the problem into simpler subproblems, we can show that the second phase also runs in $O(n\log\log n)$ time.

The organization of the paper is as follows. In Section~\ref{sec:Preliminaries}, we give some definitions and introduce some of the
tools that we need. We briefly go through the ideas of Reif's algorithm in Section~\ref{sec:Reif} before presenting
our algorithm in Section~\ref{sec:FasterAlgo}. One step of our algorithm constructs a certain division of the graph and we present
the details of this step in Section~\ref{sec:rDiv}. Finally, we make some concluding remarks and suggestions for future research
in Section~\ref{sec:ConclRem}.

\section{Preliminaries}\label{sec:Preliminaries}
For a graph $G = (V,E)$, define a \emph{piece} $P = (V_P,E_P)$ of $G$ to be the subgraph of $G$ defined by a subset
$E_P$ of $E$. In $G$, the vertices of $V_P$ incident to vertices in $V\setminus V_P$ are the \emph{boundary vertices} of $P$.
Vertices of $V_P$ that are not boundary vertices of $P$ are \emph{interior vertices} of $P$. If $G$ is edge-weighted,
we define the \emph{dense distance graph} of $P$ to be the complete graph on the set of boundary vertices of $P$ where
each edge $(u,v)$ has weight equal to the shortest path distance (w.r.t.\ the edge weights) in $P$ between $u$ and $v$.

Let $G = (V,E)$ be an $n$-vertex planar graph with a non-negative weight function $w:V\rightarrow\mathbb R$ defined on its vertices.
For a subset $A$ of $V$, define $w(A) = \sum_{v\in A}w(v)$. We assume that $w(V) = 1$. The separator theorem
of Lipton and Tarjan states that in $O(n)$ time, $V$ can be partitioned into three subsets $A$, $B$, and $C$
such that
\begin{itemize}
\item no edge joins a vertex in $A$ with a vertex in $B$,
\item $\frac 1 3\leq w(A)\leq \frac 2 3$ and $\frac 1 3\leq w(B)\leq\frac 2 3$, and
\item $|C| = O(\sqrt n)$.
\end{itemize}

Using this theorem, Frederickson~\cite{APSPPlanar} showed how to obtain, for any parameter $r\in (0,n)$, an
\emph{$r$-division} of $G$, which is a division of (the edges of) $G$ into $O(n/r)$ pieces each containing $O(r)$ vertices and
$O(\sqrt r)$ boundary vertices. He gave an $O(n\log r + (n/\sqrt r)\log n)$ time algorithm to find such a division.

A stronger version of the separator theorem is the cycle separator theorem of Miller~\cite{CycleSep} which states that if $G$ is
a \emph{plane} graph then $C$ can be chosen such that there exists a Jordan curve that only intersects $G$ in
vertices of $C$. Miller showed that such a separator can be found in linear time.

We will show that by applying the cycle separator theorem as well as ideas of Fakcharoenphol and
Rao~\cite{Fakcharoenphol}, we can obtain the $r$-division of Frederickson but with some additional properties. More precisely, define
the \emph{holes} of a piece to be the internal faces containing boundary vertices. We prove the following result in
Section~\ref{sec:rDiv}.
\begin{theorem}\label{Thm:rDiv}
For a plane $n$-vertex graph, an $r$-division in which each piece has $O(1)$ holes can be found in $O(n\log r + (n/\sqrt r)\log n)$
time.
\end{theorem}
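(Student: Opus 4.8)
The plan is to run Frederickson's recursive separator decomposition, but replacing the Lipton--Tarjan separator by Miller's cycle separator~\cite{CycleSep} and interleaving the usual \emph{size}-reducing separations with additional \emph{hole}-reducing separations, in the spirit of the hole-bounding technique of Fakcharoenphol and Rao~\cite{Fakcharoenphol}. Using cycle separators is what makes the hole bookkeeping possible: if a piece $P$ is split by a Jordan curve $J$ meeting $P$ only in separator vertices, and $J$ is chosen so as not to weave through the existing holes of $P$ (which can be enforced by a preprocessing contraction of the holes before invoking Miller's theorem), then each hole of $P$ lands on exactly one of the two sides, and only a constant number of new holes are created along $J$. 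Hence a single separation step changes the hole count of a piece by only $O(1)$.

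First I would carry out the size-reducing phase exactly as Frederickson does: repeatedly apply Miller's theorem to the current piece $P$ with uniform vertex weights until $|V_P| = O(r)$, putting the separator vertices into both children as new boundary vertices. By the previous paragraph, each leaf piece then has $O(r)$ vertices but possibly up to $O(\log(n/r))$ holes, since the recursion has that depth. I would then run a hole-reducing phase: while a piece $P$ has more than a fixed constant number of holes, apply Miller's theorem to $P$ with weight $1/h(P)$ on one boundary vertex of each hole, so that the resulting curve has $O(\sqrt{|V_P|}) = O(\sqrt r)$ vertices and leaves at most $\tfrac{2}{3}h(P)$ holes on each side; each child then has at most $\tfrac{2}{3}h(P)+O(1)$ holes, and $O(\log\log(n/r))$ such rounds reduce every piece to $O(1)$ holes.

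The step I expect to be the main obstacle is the accounting, i.e.\ verifying that the extra cuts do not destroy the defining properties of an $r$-division: that only $O(n/r)$ pieces remain and that each still has $O(\sqrt r)$ boundary vertices. The threat is that a balanced hole-split could multiply the piece count by more than a constant; ruling this out requires charging the hole-reducing cuts of a piece against its existing boundary-vertex budget --- a piece with $h$ holes already carries $\Omega(h)$ incident boundary vertices --- and tracking a combined potential $\Phi(P)$ built from $|V_P|$ and $h(P)$ that contracts geometrically at every cut, so that the augmented recursion tree still has depth $O(\log n)$ and $O(n/r)$ leaves. Given this, the $O(\sqrt r)$ bound on boundary vertices per piece follows from Frederickson's counting argument applied to the (only $O(\log\log n)$-deeper) tree.

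Finally, for the running time: the size-reducing phase is verbatim Frederickson's algorithm, costing $O(n\log r + (n/\sqrt r)\log n)$; the hole-reducing phase performs $O(n/r)$ further separator computations, each on a piece of $O(r)$ vertices and hence running in $O(r)$ time by Miller's linear-time construction, contributing $O(n)$ in total, plus an $O((n/\sqrt r)\log n)$ term for re-running Frederickson's boundary-reduction bookkeeping on the augmented tree. Summing, the total is $O(n\log r + (n/\sqrt r)\log n)$, now with every piece having $O(1)$ holes, as claimed.
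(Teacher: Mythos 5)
There is a genuine gap, and it is in the running time, not (primarily) in the hole bookkeeping. The algorithm you actually describe --- ``repeatedly apply Miller's theorem to the current piece with uniform vertex weights until $|V_P|=O(r)$,'' starting from $G$ itself --- is the simple recursive decomposition, and it costs $\Theta(n\log(n/r))$: every level of the recursion spends $\Theta(n)$ on separator computations and there are $\Theta(\log(n/r))$ levels. For the parameter regime the theorem is needed for ($r=\polylog n$), this is $\Theta(n\log n)$, which is exactly the bound the theorem is supposed to beat. Frederickson attains $O(n\log r + (n/\sqrt r)\log n)$ not by this recursion but by a two-level contraction scheme, and calling the size-reducing phase ``verbatim Frederickson's algorithm, costing $O(n\log r + (n/\sqrt r)\log n)$'' conflates the two. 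The paper's proof spends Section~\ref{subsec:FastrDiv} on precisely this: partition a spanning tree into $\Theta(n/\sqrt r)$ clusters of size $\Theta(\sqrt r)$, contract them, remove bigons to make the multigraph thin (Lemma~\ref{Lem:Thin}) so it has $O(n/\sqrt r)$ edges, find an $r$-division of this small graph in $O((n/\sqrt r)\log n)$ time via the $O(n\log n)$ algorithm (Lemma~\ref{Lem:rDiv}), expand, and then run a second round of $r$-divisions inside each expanded piece of size $O(r^{3/2})$, which costs $O(r^{3/2}\log r)$ per piece and $O(n\log r)$ in total. The technically delicate point in this adaptation --- showing that when a contracted boundary vertex is expanded back into a tree, all of its copies lie on a single face of the piece, so that pieces of the expanded division still have $O(1)$ holes --- does not appear in your proposal at all, and without some version of the contraction scheme the claimed time bound is not established.

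On the hole control itself, your deferred strategy (size-reduce first, accept up to $O(\log(n/r))$ holes per leaf, then run $O(\log\log(n/r))$ hole-halving rounds) is plausible but the accounting you flag as ``the main obstacle'' is exactly what you leave unproved: one must bound the number of extra splits and extra boundary vertices, and the charging sketch (``a piece with $h$ holes carries $\Omega(h)$ boundary vertices'' plus an unspecified potential $\Phi$) is not carried out. The paper sidesteps this entirely by interleaving: immediately after each size split, the (at most one) piece whose hole count may have grown to $h+1$ is re-split with weights on its contracted holes, so that \emph{every} piece has at most a fixed constant $h$ holes at \emph{all} times; the extra splits are then absorbed by the same recurrence that bounds boundary vertices (Lemma~\ref{Lem:WeakrDiv}), with no separate potential argument needed. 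Also, weighting one boundary vertex per hole without contracting the holes does not by itself guarantee that the separating curve avoids the holes' interiors, so the claim that each side keeps at most $\frac23 h(P)$ holes needs the contraction step there too, as in the paper.
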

In the following, when we talk about an $r$-division, we shall assume that it has the form in Theorem~\ref{Thm:rDiv}.

We shall identify an $st$-cut with the set of edges from the $s$-side to the $t$-side of the cut.

\section{Reif's Algorithm}\label{sec:Reif}
Reif's algorithm~\cite{Reif} makes use of the following duality between cuts in the primal graph and cycles in the dual graph:
a min $st$-cut in a plane undirected graph $G$ corresponds to a minimum weight simple cycle separating faces $s$ and $t$ in
the dual $G^\ast$ of $G$; here, a simple cycle is said to separate two faces if one face is in the interior and the other is in the
exterior of the cycle.

In the first step of Reif's algorithm, a shortest path
$P = p_1\rightarrow p_2\rightarrow\cdots\rightarrow p_{|P|}$ from an arbitrary vertex $p_1$ on face $s$ to an arbitrary vertex
$p_{|P|}$ on face $t$ in $G^\ast$ is computed.
%Let us choose the embedding of $G^\ast$ such that $t$ is the external face.
Then $G^\ast$ is ``cut open'' along
$P$ as follows. Remove the set $E_r$ of edges emanating right of $P$ in the direction from $s$ to $t$. Insert a copy
$P' = p_1'\rightarrow p_2'\rightarrow\cdots\rightarrow p_{|P|}'$ of $P$ and for each edge $(p_i,u)\in E_r$, add edge $(p_i',u)$.
We let $G_{st}^\ast$ be the resulting graph, see Figure~\ref{fig:Reif}(a).
\begin{figure}
\centerline{\scalebox{0.6}{\input{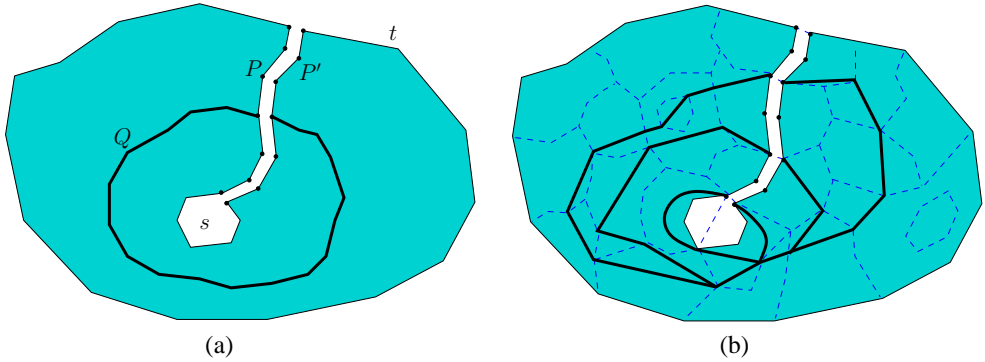}}}
\caption{(a): In cut-open graph $G_{st}^\ast$, Reif's algorithm computes a shortest path $Q$ from the midpoint on $P$ to the
         midpoint on $P'$ and recurses on the two subgraphs generated. (b): The coarse version of Reif's algorithm only
         computes shortest paths between boundary vertices on the cut-path. A refined version is then applied to find
         the remaining shortest paths. Only shortest paths from the coarse version are shown. Dashed line segments show
         the boundaries of pieces in the $r$-division.}
\label{fig:Reif}
\end{figure}

Next, Reif's algorithm computes a shortest path $Q$ in $G_{st}^\ast$ from the midpoint $p_{\lceil|P|/2\rceil}$ of $P$ to the
midpoint $p_{\lceil|P|/2\rceil}'$ of $P'$. This splits
$G_{st}^\ast$ into two
subgraphs and splits $P$ and $P'$ into two halves, one for each side of $Q$. In each of the two subgraphs, degree two-vertices
are removed by merging their incident edges. The algorithm then recurses on the two subgraphs and the two subpaths.

Let $Q_i$ be the shortest path found by the algorithm and let $p_i$ and $p_i'$ be the first and last vertex of $Q_i$, respectively.
Then the cycle in $G^\ast$ obtained from $Q_i$ by identifying $p_i$ with $p_i'$ is a minimum-weight $st$-separating cycle in
$G^\ast$. By the min cycle/min cut duality, this cycle defines a min $st$-cut in primal graph $G$.

With Dijkstra's shortest path algorithm, Reif's algorithm runs in $O(n\log^2n)$ time. This can be improved to $O(n\log n)$ time
with Frederickson's algorithm~\cite{APSPPlanar} or by speeding up Reif's algorithm using the linear time shortest path algorithm
of Henzinger et al.~\cite{SSSPPlanar}. In the next section, we will further improve Reif's algorithm to get $O(n\log\log n)$
running time.

\section{The $O(n\log\log n)$ Time Algorithm}\label{sec:FasterAlgo}
In this section, we present our algorithm and give the claimed $O(n\log\log n)$ time bound. To ease the presentation, we
leave out some details of the algorithm and return to them in Section~\ref{sec:CycleOverlap}.

We start with the following simple lemma.
\begin{Lem}\label{Lem:FastReif}
Let $s$ and $t$ be faces in a planar $n$-vertex graph $G$ and let $P$ be a given shortest path between a vertex on
$s$ and a vertex on $t$. In an application of Reif's algorithm to find a minimum weight $st$-separating cycle in $G$,
consider a subproblem defined by a subgraph $H$ and a subpath of $P$ of length $O(\log^cn)$ for a constant $c$. Then this
subproblem can be solved in $O(|H|\log\log n)$ time.
\end{Lem}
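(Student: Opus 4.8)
The plan is to run Reif's recursion on the subproblem $H$ but to replace each shortest-path computation by a fast planar shortest-path routine, and then to bound the total cost by charging it against the vertices of $H$ and the (polylogarithmic) length of the subpath of $P$. First I would recall the structure of Reif's recursion when applied to $(H,P_H)$, where $P_H$ is the relevant subpath of $P$ of length $\ell = O(\log^c n)$: each recursive call picks the midpoint of the current subpath, computes a shortest path $Q$ from that midpoint on $P_H$ to the corresponding midpoint on the copy $P_H'$, and splits into two children. The depth of this recursion is $O(\log\ell) = O(\log\log n)$, since the subpath length halves at each level and the recursion bottoms out when the subpath has constant length.

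Next I would bound the work at a single level of the recursion. At any fixed level, the subgraphs handled are edge-disjoint pieces of $H$, so their sizes sum to $O(|H|)$ (degree-two suppression only decreases sizes, and the inserted copies of the cut path add only $O(\ell)$ vertices in total across the whole recursion, which is negligible). Each shortest-path computation in a subgraph $H'$ must therefore be done in time $O(|H'|\log\log n)$ — this is exactly where I would invoke a linear-or-near-linear planar single-source shortest-path primitive; using the Henzinger et al.\ linear-time algorithm~\cite{SSSPPlanar} one even gets $O(|H'|)$ per call, and summing over a level gives $O(|H|)$ (or $O(|H|\log\log n)$ if one uses a slightly slower primitive). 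Multiplying by the $O(\log\log n)$ recursion depth yields the claimed $O(|H|\log\log n)$ total.

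The one subtlety — and the step I expect to be the main obstacle — is making sure that the recursion genuinely has depth $O(\log\log n)$ and that the pieces at a given level are genuinely edge-disjoint with total size $O(|H|)$. The depth bound relies on the subpath length, not the subgraph size, controlling the recursion: each cut path $Q$ splits the current subpath $P_H$ into two strictly smaller halves, so after $O(\log\log n)$ levels every subpath has constant length and the subproblem is solved directly in constant time. Edge-disjointness at a level follows because $Q$ separates the current subgraph into two interior-disjoint parts sharing only the curve $Q$; the vertices of $Q$ are duplicated (as in the original cut-open construction), so no edge is shared between siblings, and hence the per-level total is $O(|H|)$. Care is needed that the copies $P_H', Q$ etc.\ inserted along the way do not blow up the size — but since $|P_H| = O(\log^c n)$ and each $Q$ has length bounded by the subpath it connects, the cumulative size of all inserted paths over the whole recursion is $O(|H| + \log^c n) = O(|H|)$ (we may assume $|H| = \Omega(\log^c n)$, as otherwise the whole subproblem is already small). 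Assembling these observations gives the lemma.
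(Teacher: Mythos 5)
Your proof is correct and takes essentially the same approach as the paper: the recursion depth is $O(\log(\log^c n)) = O(\log\log n)$ because it is governed by the subpath length, and each level costs $O(|H|)$ using the linear-time shortest-path algorithm of Henzinger et al.~\cite{SSSPPlanar}. The extra discussion of per-level disjointness and of the inserted path copies just spells out details the paper's one-line proof leaves implicit.
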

\begin{proof}
Recursion depth for Reif's algorithm in $H$ is only $O(\log(\log^cn)) = O(\log\log n)$ so the running time for the subproblem
is $O(|H|\log\log n)$ using the linear time shortest path algorithm in~\cite{SSSPPlanar}.
\end{proof}

We essentially run Reif's algorithm but speed part of it up with the Dijkstra variant of Fakcharoenphol and
Rao~\cite{Fakcharoenphol}. In the following, let $G$ denote the dual of a plane embedding of the input graph. We need to find a
minimum weight $st$-separating cycle in $G$, where $s$ and $t$ are faces.

Let $P$ be a shortest path in $G$ from an arbitrary vertex on $s$ to an arbitrary vertex on $t$. We can find this path in linear
time using the algorithm in~\cite{SSSPPlanar}. We first run a ``coarse'' version of Reif. This will identify in $O(n\log\log n)$ time
a subset of all the $st$-separating cycles found by the original algorithm. More precisely, the
set of cycles found will split $G$ into subgraphs each of which contains a subpath of $P$ of length $O(\log^cn)$ for some constant
$c$. We then run the ``refined'' Reif algorithm by applying Lemma~\ref{Lem:FastReif} to each subgraph and its associated
subpath of $P$. This will find the minimum weight $st$-separating cycle in $G$. By ensuring that the total size of the subgraphs is
$O(n)$, the entire algorithm runs in $O(n\log\log n)$ time. Figure~\ref{fig:Reif}(b) illustrates the output of the first phase of our
algorithm.

\subsection{First phase}
We will now describe the first phase of our algorithm which is the coarse version of Reif's algorithm.

\paragraph{$r$-division}
First, we apply Theorem~\ref{Thm:rDiv} to obtain an $r$-division of $G$ for
$r = \log^6n$. This takes $O(n\log r + (n/\sqrt r)\log n) = O(n\log\log n)$ time.

\paragraph{Cutting pieces open}
We make an incision in $G$ along shortest path $P$ as in Reif's algorithm. This induces incisions in those pieces containing
parts of $P$ and we update the pieces accordingly. If a boundary vertex of a piece belongs to $P$ before the incision, we regard both
of its two copies after the incision as boundary vertices of that piece. Note that there will still be only $O(\sqrt r)$ boundary
vertices in each piece and these boundary vertices will still be on a constant number of faces after the incision. Hence, the
resulting set of pieces forms an $r$-division in the cut-open graph.

\paragraph{Dense distance graphs}
Next, we compute dense distance graphs of the pieces in the $r$-division. To do this, we shall apply Klein's multiple-source shortest
paths algorithm~\cite{MultiSrc}. For an $h$-vertex plane graph $H$ and a fixed face $f$ of $H$, this algorithm builds a data
structure in $O(h\log h)$ time and space such that shortest path queries between vertex pairs $(u,v)$, where either $u$ or $v$ is
on $f$, can be answered in $O(\log h)$ time per query.

For each piece, we apply Klein's algorithm to set up a data structure for the external face and query this data structure for
shortest path distances in the piece from boundary vertices on this face to all other boundary vertices. Since there are
$O(r)$ pairs of boundary vertices, we obtain all these distances in $O(r\log r)$ time. We similarly set up data structures for
each hole and get the distances between the remaining pairs of boundary vertices. Since the piece has a constant number of holes,
total time to construct its dense distance graph is $O(r\log r)$. Over all pieces, this is
$O((n/r)r\log r) = O(n\log r) = O(n\log\log n)$ time. We represent the edge weights of each dense distance graph in a distance
matrix with $O(\sqrt r)$ rows and columns.

\paragraph{Fast Dijkstra}
Fakcharoenphol and Rao~\cite{Fakcharoenphol} gave an efficient Dijkstra variant for planar graphs. More precisely,
they showed the following. Given a collection of pieces each having a
constant number of holes and given their dense distance graphs, any shortest path tree in the union of these dense distance
graphs can be computed in $O(h\log^2n)$ time, where $h$ is the total number of \emph{vertices} in these graphs. In general,
this time bound is sublinear in the number of edges.

Since we have computed dense distance graphs for the pieces in our $r$-division and since the total number of boundary vertices
of these pieces is $O(n/\sqrt r)$, it follows that a shortest path between any two boundary vertices in the $r$-division can
be computed in $O((n/\sqrt r)\log^2n) = O(n/\log n)$ time. Note that this shortest path consists of edges from the dense distance graphs
so it is an implicit representation of a shortest path in the underlying cut-open graph.

\paragraph{Coarse Reif}
Let $p_1,\ldots,p_{|P|}$ be the ordered sequence of vertices of $P$, starting with the vertex on face $s$. Let
$p_{i_1},\ldots,p_{i_k}$ be the (possibly empty) subsequence of vertices that are boundary vertices in pieces of the $r$-division.
Note that these vertices partition $P$ into subpaths each of which is contained in a piece.

The coarse version of Reif's algorithm is the normal algorithm of Reif restricted to subsequence $p_{i_1},\ldots,p_{i_k}$
and using the $O(n/\log n)$ time shortest path algorithm for each vertex in the subsequence.
Since recursion depth is $O(\log n)$, total time for the first phase of our algorithm is $O(n)$ in addition to the
$O(n\log\log n)$ time to find the $r$-division and to set up the dense distance graphs. The $st$-separating cycles found in this
phase partition $G$ into subgraphs each containing a subpath of $P$ fully contained in a piece of
the $r$-division. Hence, the length of each such subpath is bounded by the size $O(r) = O(\log^6n)$ of a piece.

\paragraph{Subgraphs for recursive calls}%\label{sec:SplitPieces}
%When cutting $G$ open along $P$,
%we replace each boundary vertex in the $r$-division by two new boundary vertices, one for each of the two copies
%of $P$ in the cut-open graph. The dense distance graph for each cut-open piece inherits the distances from the original
%dense distance graph except that distances between pairs of boundary vertices belonging to different components of the
%cut-open piece are set to infinite in the distance matrix. We can still apply the Dijkstra variant of~\cite{Fakcharoenphol}
%in the cut-open piece since the number of faces containing boundary vertices remains bounded by a constant. For a pair of
%boundary vertices in the union of cut-open pieces, this algorithm will then give us the weight of a shortest path in $G$ between the
%corresponding vertices in $G$ among those paths that do not cross $P$. Running time for the Dijkstra variant is
%still $O((n/\sqrt r)\log^2n) = O(n/\log n)$ since the number of boundary vertices increases by a factor of at most two
%when cutting pieces open.
When applying the coarse version of Reif's algorithm, we need to find the subgraphs for recursive calls.
Consider a subgraph $H$ in some recursive call. We associate with $H$ the boundary vertices belonging
to $H$ and the cyclic orderings of these vertices on holes and external faces of pieces. When running the fast Dijkstra
variant for $H$, only distances between boundary vertices associated with $H$ are considered in the dense distance graphs. Hence,
these graphs need not be updated in recursive calls. The time to find a shortest path in $H$ will be $O(h\log^2n)$, where $h$ is
the number of boundary vertices in $H$.

\subsection{Second phase}
In order to run the second phase of our algorithm, we need to convert the shortest paths consisting of edges from dense distance
graphs to the underlying shortest paths in $G$ and we need to find the subgraphs of $G$ bounded by these paths.
In Section~\ref{sec:CycleOverlap} we show how to do this in $O(n)$ time such that the total size of the subgraphs is $O(n)$.
Applying Lemma~\ref{Lem:FastReif} with constant $c = 6$ to each subgraph, we get $O(n\log\log n)$ time for the second phase of
our algorithm. Hence, the entire algorithm has $O(n\log\log n)$ running time.

\subsection{Overlapping subgraphs}\label{sec:CycleOverlap}
We need to ensure that the total size of the subgraphs generated by the coarse version
of Reif's algorithm is not too large, i.e., we need the total size to be linear. The problem is that subgraphs overlap so a
vertex can belong to several subgraphs. The original algorithm of Reif ensures linear total size by deleting, in every subgraph
generated, each degree two vertex by replacing the two edges $e_1$ and $e_2$ incident to it by one whose weight is the sum of the
weights of $e_1$ and $e_2$.

\paragraph{First phase}
In the first phase of our algorithm, we do something similar: if a boundary vertex $p$ is incident (in the union of dense distance
graphs) to only two other boundary
vertices $q_1$ and $q_2$ for the current subgraph, $p$ is removed and a single \emph{super edge} is added between $q_1$ and $q_2$ whose
weight is equal to the sum of weights of edges $(q_1,p)$ and $(p,q_2)$. We repeat this process until no boundary vertices of degree two
exist. This will ensure that the total number of boundary vertices
over all subgraphs generated is $O(n/\sqrt r)$ and this will also be a bound on the total number of super edges generated.

The shortest path algorithm of~\cite{Fakcharoenphol} can easily be extended to deal with super edges in addition to the dense
distance graphs: simply regard each super edge as a dense distance graph consisting of two vertices and one edge. Hence,
the total running time for computing shortest paths in the first phase is $O((n/\sqrt r)\log^3n) = O(n)$.

\paragraph{Second phase}
We also face the problem with overlapping subgraphs when converting the shortest paths consisting of dense distance graph edges
to shortest paths in $G$ for the second phase of the algorithm.
Klein's algorithm~\cite{MultiSrc} can report the underlying path in $G$ corresponding to a dense distance
graph edge in time proportional to the length of the path (this requires that $G$ has constant degree which we can assume without
loss of generality). Hence, after the first phase we can obtain each of the shortest paths computed in time proportional to their total
size. However, this size can be super-linear since the paths can share many vertices.
We deal with this problem in the following. We will show that an implicit representation of the paths can be computed in $O(n)$ time.

\paragraph{Implicit representation of paths}
Define $p_{i_1},\ldots,p_{i_k}$ as above, i.e., the ordered sequence of vertices of $P$
from which the coarse Reif algorithm has computed shortest paths $P_{i_1},\ldots,P_{i_k}$. Let $p_{i_1}',\ldots,p_{i_k}'$ be the
endpoints of these paths. We start by obtaining the shortest
path $Q_{i_1}$ in $G$ from $p_{i_1}$ to $p_{i_1}'$ using Klein's algorithm on each dense distance graph edge of $P_{i_1}$.
This takes $O(|Q_{i_1}|)$ time.

To find the shortest path $Q_{i_2}$ in $G$ from $p_{i_2}$ to $p_{i_2}'$, we similarly apply Klein's algorithm on $P_{i_2}$. If we
encounter no vertices already visited, we obtain the entire path and move on to $P_{i_3}$. Otherwise, let $v_1$ be the first
already visited vertex and let $Q_{v_1}$ be the path found. We stop the algorithm when reaching $v_1$ and instead start
obtaining vertices of $Q_{i_2}$ backwards from $p_{i_2}'$ until reaching an already visited vertex $v_2$. Let $Q_{v_2}$ be the path
found but ordered from $v_2$ to $p_{i_2}'$.

If $v_2$ is on $Q_{i_1}$, a simple property of shortest paths
allows us to choose $Q_{i_2}$ as the concatenation of $Q_{v_1}$, the subpath of $Q_{i_1}$ from $v_1$ to $v_2$, and $Q_{v_2}$, in
that order. This gives us an implicit representation of $Q_{i_2}$ in $O(|Q_{i_2}\setminus Q_{i_1}|)$ time.

In a dense distance graph, an edge $(u,v)$ need not represent the same underlying shortest path as the edge $(v,u)$ since shortest
paths need not be unique. Hence, it may happen that $v_2$ is on $Q_{v_1}$ and not on $Q_{i_1}$.
If so, we can redefine $Q_{i_2}$ to be the subpath of $Q_{v_1}$ from $p_{i_2}$ to $v_2$ followed by $Q_{v_2}$.
Letting $Q_{v_2,v_1}$ be the subpath of $Q_{v_1}$ from $v_2$ to $v_1$, total time to find $Q_{i_2}$ is
$O(|Q_{i_2}| + |Q_{v_2,v_1}|) = O(|Q_{i_2}\setminus Q_{i_1}| + |Q_{v_2,v_1}|)$.
Since none of the vertices on $Q_{v_2,v_1}$, excluding $v_2$, will be visited again, we can afford to spend time $O(|Q_{v_2,v_1}|)$.

Repeating this process for the remaining shortest paths $P_{i_3},\ldots,P_{i_k}$ gives an implicit representation of the
corresponding shortest paths in $G$ and it follows from the above analysis that running time is $O(n)$. In linear time it is then
easy to obtain from this representation the desired subgraphs needed in the second phase of our algorithm and to ensure that
they have total linear size.

\section{$r$-division}\label{sec:rDiv}
In this section, we prove Theorem~\ref{Thm:rDiv}, i.e., we show that an $r$-division can be found in $O(n\log r + (n/\sqrt r)\log n)$
time. We use an approach similar to that of Frederickson~\cite{APSPPlanar}: contract $G$, find an $r$-division of this smaller
graph, expand the graph back to $G$, and split some of the resulting pieces further to get the desired $r$-division of the whole graph.
First, however, we shall give a simple $O(n\log n)$ time algorithm. In Section~\ref{subsec:FastrDiv}, this algorithm will be
used to find an $r$-division of the contracted graph.

\subsection{Weak $r$-division}
To obtain an $r$-division of $G$ in $O(n\log n)$ time, we again follow Frederickson's approach. First we find a
\emph{weak $r$-division} which is a division of $G$ into
$O(n/r)$ pieces each of size $O(r)$ and with a constant number of holes,
such that the total number of boundary vertices over all pieces is $O(n/\sqrt r)$.
In Section~\ref{subsec:rDiv}, we will then split pieces further to get the desired $r$-division in $O(n\log n)$ time.

Consider the following recursive algorithm to find a weak $r$-division: regard $G$ as a piece with no boundary vertices, split it
recursively into two subpieces with the cycle separator theorem of Miller and recursive on them. The recursion stops when a piece
has size at most $r$. As shown by Frederickson~\cite{APSPPlanar}, this gives a weak $r$-division. However, it does not ensure a
constant bound on the number of holes in each piece which we need in our application.

To deal with this, we use ideas of Fakcharoenphol and Rao~\cite{Fakcharoenphol} to keep the number of holes bounded by some constant
$h$. The initial piece is the whole graph and thus contains no holes. Now, consider the general recursive step and let $P = (V_P,E_P)$
be the current piece. Assume it has at most $h$ holes. Apply the cycle separator theorem to $P$ with all vertices assigned weight
$1/|V_P|$. This splits $P$ into two subpieces $P_1 = (V_{P_1},E_{P_1})$ and $P_1' = (V_{P_1'},E_{P_1'})$, where
$|V_{P_1}| = \alpha|V_P| + O(\sqrt{|V_P|})$ and $|V_{P_1'}| = (1-\alpha)|V_P| + O(\sqrt{|V_P|})$ for some
$\frac 1 3\leq\alpha\leq \frac 2 3$.
Assume w.l.o.g.\ that $P_1$ belongs to the interior of the separator cycle. Then this subpiece has at most $h$ holes. However,
since the separator cycle may have introduced a new hole in $P_1'$, this subpiece may have $h + 1$ holes.

Contract the holes of $P_1'$
into \emph{super vertices} and apply the cycle separator theorem with vertex weights distributed evenly on super vertices. Expand them
back to holes and let $P_2$ and $P_3$ be the resulting two pieces. As shown in~\cite{Fakcharoenphol}, the number of holes in
each of the two subpieces will be a constant factor smaller than $h + 1$ so if we pick $h$ sufficiently large, $P_2$ and $P_3$ each
have at most $h$ holes. Now, we recurse on $P_1$, $P_2$, and $P_3$ until all pieces contain at most $r$ vertices.
\begin{Lem}\label{Lem:WeakrDiv}
The above procedure gives, for any parameter $r\in(0,n)$, a weak $r$-division of $G$ where each piece has a constant number of holes.
Running time is $O(n\log (n/r))$.
\end{Lem}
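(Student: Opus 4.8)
The plan is to analyze the recursive procedure described above and verify the three claimed properties: that it produces $O(n/r)$ pieces each of size $O(r)$, that the total number of boundary vertices is $O(n/\sqrt r)$, that each piece has a constant number of holes, and that the running time is $O(n\log(n/r))$. The hole bound is already essentially argued in the text (following Fakcharoenphol and Rao~\cite{Fakcharoenphol}): each recursive step splits a piece of at most $h$ holes into pieces of at most $h$ holes, by first doing one cycle-separator cut and then, if a new hole was created, doing a second cut after contracting holes into super vertices so that the new hole count $h+1$ shrinks by a constant factor; choosing $h$ large enough relative to that constant makes the bound self-sustaining. So I would treat the hole invariant as maintained throughout and focus on the size, boundary-vertex, and time bounds.

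First I would set up the recursion tree $T$ of the procedure, where each internal node corresponds to a piece on which we invoke the (one or two) cycle-separator cuts and each leaf is a final piece with at most $r$ vertices. Because every cut is balanced in the weighted sense ($\frac13\le\alpha\le\frac23$, up to an additive $O(\sqrt{|V_P|})$ overlap on the separator cycle), the number of vertices of a piece at depth $d$ is $O(n\,(2/3)^{d})$ plus the accumulated separator overhead; a standard argument (as in Frederickson~\cite{APSPPlanar}) shows the recursion reaches pieces of size $\le r$ after $O(\log(n/r))$ levels, and that the additive $O(\sqrt{\cdot})$ terms do not spoil this because they form a geometric series dominated by the current piece size. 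This gives $O(n/r)$ leaves, hence $O(n/r)$ pieces of size $O(r)$, and it also shows every root-to-leaf path has length $O(\log(n/r))$.

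Next, for the running time: each recursive step on a piece $P$ runs Miller's linear-time cycle separator (at most twice, plus the linear-time contraction/expansion of holes into super vertices), costing $O(|V_P|)$. Summing $|V_P|$ over all pieces at a fixed depth is $O(n)$ since pieces at one level are essentially disjoint up to the separator overlaps (which, being $O(\sqrt{|V_P|})$ per cut and charged against geometrically shrinking sizes, contribute only an $O(n)$ total per level as well). Over $O(\log(n/r))$ levels this yields $O(n\log(n/r))$ total time, as claimed.

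The main obstacle — and the step I would treat most carefully — is the bound on the total number of boundary vertices, $O(n/\sqrt r)$. A vertex becomes a boundary vertex of a final piece only if, at some point in the recursion, it lies on a separator cycle used to split an ancestor piece; once a vertex is ``frozen'' onto a separator it propagates as a boundary vertex to the (at most two) leaves below that cut, but it is not re-counted at deeper cuts in the sense that matters. Since a piece $P$ of size $|V_P|$ contributes $O(\sqrt{|V_P|})$ new boundary vertices per cut, I would bound $\sum_P \sqrt{|V_P|}$ over all pieces $P$ in the recursion tree. Grouping by level, at level $d$ the pieces have total size $O(n)$ and, by concavity of $\sqrt{\cdot}$ together with the fact that there are $O(n/( n(2/3)^d)) = O((3/2)^d)$ pieces each of size $\Theta(n(2/3)^d)$, the sum of square roots at level $d$ is $O\!\big(\sqrt{n\,(2/3)^d}\cdot(3/2)^d\big)=O\!\big(\sqrt n\,(3/2)^{d/2}\big)$; this geometric series is dominated by its last term, at level $d=\Theta(\log(n/r))$, giving $O(\sqrt n\cdot\sqrt{n/r})=O(n/\sqrt r)$. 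I would write this estimate out with care, since it is the one place where a naive ``$O(\log(n/r))$ levels times $O(\sqrt n)$ each'' bound would be too weak and the geometric decay toward the root is what saves it — exactly the subtlety already present in Frederickson's analysis, here unchanged by the extra hole-controlling cut because that cut only doubles the constant.
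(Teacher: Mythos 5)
There is a genuine gap, and it sits exactly where you said you would be most careful: the boundary-vertex bound. Your level-by-level estimate rests on the claim that at depth $d$ there are $O((3/2)^d)$ pieces, each of size $\Theta(n(2/3)^d)$. Only the \emph{upper} bound $O(n(2/3)^d)$ on piece size is available; there is no matching lower bound, and in this particular procedure there cannot be one: the second cut, which splits $P_1'$ into $P_2$ and $P_3$, is balanced with respect to weights placed on the contracted holes (super vertices), not on vertices, so $P_3$ can be arbitrarily small. (Even with genuinely vertex-balanced cuts, depth-$d$ pieces range in size from roughly $(1/3)^d n$ to $(2/3)^d n$, so equating the piece count with $n$ divided by the maximum size is not valid.) The correct worst-case count of pieces at depth $d$ is $\min(3^d, O(n/r))$, and plugging that into your concavity/Cauchy--Schwarz step gives level sums of order $n/\sqrt r$ for up to $\Theta(\log(n/r))$ of the deepest levels, i.e.\ a total of $O((n/\sqrt r)\log(n/r))$ --- the geometric domination by the last level, which your argument needs, is lost. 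The paper avoids this by a per-piece recurrence rather than a per-level one: it sets $B(n)\le c\sqrt n + B(\alpha_1 n + c'\sqrt n)+B(\alpha_2 n + c'\sqrt n)+B(\alpha_3 n + c'\sqrt n)$ with only $\frac13\le\alpha_1\le\frac23$ and $\alpha_2\ge\alpha_3\ge 0$ assumed, and proves $B(n)\le d\bigl(n/\sqrt r-\frac13\sqrt n\bigr)$ by induction; the slack comes from $\sqrt{\alpha_1}+\sqrt{\alpha_2}+\sqrt{\alpha_3}\ge \sqrt{\alpha_1}-\alpha_1+1>1$, which requires no balance at all between $\alpha_2$ and $\alpha_3$. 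Your blanket statement that ``every cut is balanced in the weighted sense'' is therefore not just a simplification but false for the second cut, and it is precisely what your analysis leans on.

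A second, smaller gap: you infer ``$O(n/r)$ leaves'' from the recursion depth being $O(\log(n/r))$, which does not follow (a ternary tree of that depth can have far more leaves). The paper's argument is that leaf pieces of type $P_1$ or $P_2$ have size $\Theta(r)$ (their parent had size $>r$ and $\alpha_1\ge\frac13$, $\alpha_2\ge\frac16$), hence number $O(n/r)$ since the total size of all pieces is $n+B(n)=O(n)$, and that $P_3$-type pieces are no more numerous than $P_1$-type ones. This extra step is needed here precisely because $P_3$ pieces can be tiny. Your reduction of the boundary count to $\sum_P O(\sqrt{|V_P|})$ over split pieces, and your hole and running-time arguments, are fine and match the paper.
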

\begin{proof}
We have already argued that the number of holes in the pieces generated is constant.
We now show that the total number of boundary vertices over all pieces is $O(n/\sqrt r)$.

For any boundary vertex $v$ in the
weak $r$-division, let $b(v)$ denote one less than the number of pieces containing $v$. Let $B(n)$ be the sum of $b(v)$ over all
such $v$. There are nonnegative values $\alpha_1$, $\alpha_2$, and $\alpha_3$ with $\alpha_1 + \alpha_2 + \alpha_3 = 1$ such that
in the above procedure, $P_i$ contains at most $\alpha_in + c'\sqrt n$ vertices, $i = 1,2,3$. Note that
$\frac 1 3\leq\alpha_1\leq \frac 2 3$. We shall assume that $\alpha_2\geq\alpha_3$. For $n > r$,
\[
  B(n) \leq c\sqrt n + B(\alpha_1n + c'\sqrt n) + B(\alpha_2n + c'\sqrt n) + B(\alpha_3n + c'\sqrt n)
\]
for constants $c,c' > 0$, and $B(n) = 0$ for $n\leq r$. We will prove by induction on $n\geq \frac r 9$ that
$B(n) \leq d(n/\sqrt r - \frac 1 3\sqrt n)$ for some constant $d > 0$ (to be specified).

Clearly, this holds for $\frac r 9\leq n\leq r$ for any choice of $d > 0$ so assume that $n > r$ and that the claim holds for
smaller values. We have $\alpha_1\geq \frac 1 3$ and since $\alpha_2\geq\alpha_3$ and $\alpha_2 + \alpha_3\geq \frac 1 3$, we have
$\alpha_2\geq \frac 1 6$. Hence, both
$\alpha_1n + c'\sqrt n$ and $\alpha_2n + c'\sqrt n$ are at least $\frac n 6 > \frac r 9$ so the induction hypothesis can be applied to
both of these values. We distinguish between two cases: $\alpha_3n + c'\sqrt n\geq \frac r 9$ and $\alpha_3n + c'\sqrt n < \frac r 9$.
Assume first that $\alpha_3n + c'\sqrt n\geq \frac r 9$. The induction hypothesis gives
\begin{align*}
  B(n,r) & \leq c\sqrt n + \frac{dn}{\sqrt r} + \frac{3dc'\sqrt n}{\sqrt r} - {}\\
         & \phantom{\leq {}} \frac d 3\left(\sqrt{\alpha_1n + c'\sqrt n} + \sqrt{\alpha_2n + c'\sqrt n} +
                                            \sqrt{\alpha_3n + c'\sqrt n}\right).
\end{align*}

We will prove that the right-hand side is at most $d(n/\sqrt r - \frac 1 3\sqrt n)$, i.e., that
\[
  c + \frac{3dc'}{\sqrt r} \leq \frac d 3\left(\sqrt{\alpha_1 + \frac{c'}{\sqrt n}} + \sqrt{\alpha_2 + \frac{c'}{\sqrt n}} +
                                               \sqrt{\alpha_3 + \frac{c'}{\sqrt n}} - 1\right),
\]
which will follow if we can show that
\begin{equation}
  1 + \frac{3c}d + \frac{9c'}{\sqrt r} \leq \sqrt{\alpha_1} + \sqrt{\alpha_2} + \sqrt{\alpha_3}.\label{alpha_eqn}
\end{equation}

We may assume that $r$ is at least some large constant. Picking $d$ sufficiently large as well, we can make the left-hand
side in~(\ref{alpha_eqn}) equal to $1 + \epsilon$ for an arbitrarily small constant $\epsilon > 0$. Since $\alpha_2,\alpha_3\leq 1$
and since $\alpha_1 + \alpha_2 + \alpha_3 = 1$,
\[
  \sqrt{\alpha_1} + \sqrt{\alpha_2} + \sqrt{\alpha_3} \geq
  (\sqrt{\alpha_1} - \alpha_1) + \alpha_1 + \alpha_2 + \alpha_3 =
  \sqrt{\alpha_1} - \alpha_1 + 1.
\]
Since $\frac 1 3\leq\alpha_1\leq \frac 2 3$, the right-hand side in~(\ref{alpha_eqn}) is larger
than $\sqrt{\frac 2 3} - \frac 2 3 + 1 > 1$. This proves the induction step for the case $\alpha_3n + c'\sqrt n\geq \frac r 9$.

Now, assume that $\alpha_3n + c'\sqrt n < \frac r 9$. Then $B(\alpha_3n + c'\sqrt n) = 0$ and the induction hypothesis gives
\[
  B(n,r) \leq c\sqrt n + \frac{dn}{\sqrt r} + \frac{2dc'\sqrt n}{\sqrt r} -
              \frac d 3\left(\sqrt{\alpha_1n + c'\sqrt n} + \sqrt{\alpha_2n + c'\sqrt n}\right).
\]
The induction step will follow from the inequality
\begin{equation}
  1 + \frac{3c}d + \frac{6c'}{\sqrt r} \leq \sqrt{\alpha_1} + \sqrt{\alpha_2}.\label{alpha_eqn2}
\end{equation}
Since
\[
  \alpha_3n < \alpha_3n + c'\sqrt n < \frac r 9 < \frac n 9,
\]
we have $\alpha_3 < \frac 1 9$ and hence $\alpha_1 + \alpha_2 > \frac 8 9$. Since also $\frac 1 3\leq\alpha_1\leq \frac 2 3$,
the right-hand side of~(\ref{alpha_eqn2}) is greater than
\[
  (\sqrt{\alpha_1} - \alpha_1) + \alpha_1 + \alpha_2 > \sqrt{\frac 2 3} - \frac 2 3  + \frac 8 9 > 1
\]
and~(\ref{alpha_eqn2}) follows by picking $d$ and $r$ sufficiently large.

We have shown that the total number of boundary vertices over all pieces is $O(n/\sqrt r)$. To show that the procedure generates a
weak $r$-division, we also need to give an $O(n/r)$ bound on the number of pieces. Pieces of the form $P_1$ or $P_2$ each have
size $\Theta(r)$. Since the total number of vertices over all pieces is $n + B(n) = O(n)$, the number of such pieces is $O(n/r)$.
The number of pieces of the form $P_3$ cannot be larger than the number of the form $P_1$ (or $P_2$). Hence, the total number
of pieces is $O(n/r)$.

It follows that the procedure generates a weak $r$-division. Since Miller's cycle separator can be found in linear time and since
the procedure recurses until pieces have size at most $r$, running time is $O(n\log(n/r))$.
\end{proof}

\subsection{$r$-division in $O(n\log n)$ time}\label{subsec:rDiv}
To obtain an $r$-division in $O(n\log n)$ time, we first find a weak $r$-division with Lemma~\ref{Lem:WeakrDiv}. Each piece has
$O(r)$ vertices and a constant number of holes but there may be more than order $\sqrt r$ boundary vertices in the piece.

We continue to follow Frederickson's approach while ensuring a constant number of holes in each piece. If there is a piece
$P$ containing more than $c\sqrt r$ boundary vertices for some constant $c$, apply the cycle separator theorem as in the
weak $r$-division procedure to obtain subpieces $P_1$ and $P_1'$. However, instead of distributing the vertex weights evenly on all
vertices of $P$ when
applying the theorem, we now distribute weights evenly on boundary vertices only. We then infer subpieces $P_2$ and $P_3$ of $P_1'$
as before by distributing vertex weights evenly on super vertices defined by contracted holes of $P_1'$.
This is repeated until each piece has at most $c\sqrt r$ boundary vertices.
\begin{Lem}\label{Lem:rDiv}
The above procedure gives, for any parameter $r\in(0,n)$, an $r$-division of $G$ in $O(n\log n)$ time.
\end{Lem}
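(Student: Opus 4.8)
The plan is to follow Frederickson's recursive splitting strategy, but carefully bounding the number of holes as in the weak $r$-division procedure. Starting from the weak $r$-division provided by Lemma~\ref{Lem:WeakrDiv}, every piece has $O(r)$ vertices and $O(1)$ holes, but possibly too many boundary vertices. For each offending piece $P$ with more than $c\sqrt r$ boundary vertices, I would apply the cycle separator theorem with weights spread evenly over the \emph{boundary} vertices of $P$ (rather than all vertices), producing $P_1$ and $P_1'$, and then repair the extra hole in $P_1'$ by contracting its holes to super vertices, applying the cycle separator again with weights on super vertices, and expanding to get $P_2,P_3$ — exactly the hole-bounding trick from the weak $r$-division argument. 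This guarantees each recursively generated piece still has $O(1)$ holes and $O(r)$ vertices.

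The two quantities to control are the number of pieces and the total number of boundary vertices. For the hole count I can simply cite the argument already given in the proof of Lemma~\ref{Lem:WeakrDiv}. For the boundary-vertex count, the key point is that each split of a piece $P$ with $b$ boundary vertices produces pieces whose boundary-vertex counts are roughly $\alpha b + O(\sqrt{r})$ and $(1-\alpha)b + O(\sqrt{r})$ with $\tfrac13 \le \alpha \le \tfrac23$ (the $O(\sqrt r)$ term coming from the $O(\sqrt{|V_P|}) = O(\sqrt r)$ separator vertices, since $|V_P| = O(r)$), plus the $P_2,P_3$ split which adds only another $O(\sqrt r)$. So I would set up a recurrence analogous to the $B(n)$ recurrence in Lemma~\ref{Lem:WeakrDiv}, but now in terms of the number $b$ of boundary vertices of a piece, with the recursion bottoming out when $b \le c\sqrt r$; the same style of induction shows that the total number of boundary vertices created during this refinement, summed over all resulting pieces, is $O(b_0/\sqrt r)$ per original piece, hence $O(n/\sqrt r)$ overall since $\sum b_0 = O(n/\sqrt r)$ already and each is $O(r)$. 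For the piece count, since splits reduce the number of boundary vertices by a constant factor down to $\Theta(\sqrt r)$, the number of leaves in the split tree of a piece with $b_0$ boundary vertices is $O(b_0/\sqrt r)$, so the total number of final pieces is $O(n/r) + \sum O(b_0/\sqrt r) = O(n/r)$, using again $\sum b_0 = O(n/\sqrt r)$ and each $b_0 = O(r)$.

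For the running time, each application of Miller's cycle separator to a piece of size $O(r)$ takes $O(r)$ time, and the contraction/expansion of $O(1)$ holes is also $O(r)$. The total work is then $O(r)$ times the total number of separator applications. Since every split of a piece permanently destroys at least a constant fraction of its boundary vertices (bringing some leaf closer to the $\sqrt r$ threshold), and since the recursion on each original piece has depth $O(\log(b_0/\sqrt r)) = O(\log r)$, the number of splits charged to one original piece is $O((b_0/\sqrt r)\log r)$, — but the cleaner accounting is that total work is $O(n\log r)$ for this refinement (mirroring Frederickson's $O(n\log r)$ term), plus the $O(n\log(n/r))$ from Lemma~\ref{Lem:WeakrDiv}; together these are $O(n\log n)$.

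The main obstacle is the boundary-vertex recurrence: unlike the vertex recurrence in Lemma~\ref{Lem:WeakrDiv}, here the additive $O(\sqrt r)$ overhead per split is of the same order as the threshold $\sqrt r$, so one must verify that the geometric shrinkage of the $\alpha$-fractions still dominates the additive terms; this requires the same ``$\sqrt{\alpha_1}+\sqrt{\alpha_2}+\sqrt{\alpha_3} > 1+\epsilon$'' slack exploited in the previous lemma, now applied with the role of $\sqrt n$ played by $\sqrt r$ and with the recursion variable being the boundary-vertex count. I would present that induction in the same format as before, and note that the extra $P_2,P_3$ split only contributes an additional constant-multiplied $O(\sqrt r)$ term that is absorbed by enlarging the constant $d$, exactly as in Lemma~\ref{Lem:WeakrDiv}.
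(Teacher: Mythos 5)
Your proposal is correct and takes essentially the same route as the paper: refine the weak $r$-division by repeatedly splitting any piece with more than $c\sqrt r$ boundary vertices using Miller's separator weighted on boundary vertices (with the hole-repairing $P_2,P_3$ step), charging $O(b_0/\sqrt r)$ splits to a piece with $b_0$ boundary vertices, each split creating $O(\sqrt r)$ new boundary vertices and $O(1)$ new pieces; the paper merely does this bookkeeping directly via the counts $t_i$ of pieces with $i$ boundary vertices (citing Frederickson) instead of a $B$-style recurrence. One slip to correct: the number of new boundary vertices attributable to an original piece is $O(b_0)$ (that is, $O(b_0/\sqrt r)$ splits times $O(\sqrt r)$ vertices per split), not $O(b_0/\sqrt r)$, which is the split count --- but since your global bound is derived from $\sum b_0 = O(n/\sqrt r)$, which is exactly the corrected computation, the stated totals $O(n/\sqrt r)$ boundary vertices, $O(n/r)$ pieces, and $O(n\log n)$ time all stand.
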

\begin{proof}
The proof is more or less identical to that in~\cite{APSPPlanar}. We include it here for completeness.
In the weak $r$-division, let $t_i$ be the number of pieces with exactly $i$ boundary vertices. From the proof of
Lemma~\ref{Lem:WeakrDiv}, we have $\sum_i it_i = \sum_{v\in V_B}(b(v) + 1)$, where $V_B$ is the set of boundary vertices over all
pieces in the weak $r$-division. Hence, $\sum_{v\in V_B}(b(v) + 1) < 2B(n) = O(n/\sqrt r)$.

In the weak $r$-division, consider a piece $P$ with $i > c\sqrt r$ boundary vertices. When the above procedure splits $P$ into
subpieces $P_1$, $P_2$, and $P_3$, each of them contains at most a constant fraction of the boundary vertices of $P$. Hence, after
$di/(c\sqrt r)$ splits of $P$ for some constant $d$, all subpieces will contain at most $c\sqrt r$ boundary vertices. This will result
in at most $1 + di/(c\sqrt r)$ subpieces and at most $c'\sqrt r$ new boundary vertices per split for some constant $c'$. We may assume
that $c'\leq c$. The total number of new boundary vertices introduced by the above procedure is thus
\[
  \sum_i (c'\sqrt r)(di/(c\sqrt r))t_i\leq d\sum_i it_i = O(n/\sqrt r)
\]
and the number of new pieces is at most
\[
  \sum_i (di/(c\sqrt r))t_i = O(n/r).
\]
Hence, the procedure generates an $r$-division. Since a weak $r$-division can be found in $O(n\log n)$ time, an $r$-division can also
be found within this time bound.
\end{proof}

\subsection{A faster algorithm}\label{subsec:FastrDiv}
We now show how to get the desired running time of $O(n\log r + (n/\sqrt r)\log n)$ in Theorem~\ref{Thm:rDiv}.
We start by computing a spanning tree $T$ of $G$ (here, we assume that $G$ is connected; we can always add
infinite-weight edges to achieve this) and partitioning it into $\Theta(n/\sqrt r)$ subtrees each of size $\Theta(\sqrt r)$; the
subtrees cover all vertices and are pairwise vertex-disjoint. This takes $O(n)$ time with the algorithm in~\cite{Fre1}.

Let $G'$ be a plane multigraph obtained from $G$ by contracting each subtree to a single vertex.
This graph contains $O(n/\sqrt r)$ vertices. To obtain the same asymptotic bound on the number of edges, we will turn $G'$ into a
so called thin graph. In a plane multigraph, a \emph{bigon} is a face defined by two vertices and edges.
A plane multigraph is \emph{thin} if it contains no bigons. The following result from~\cite{ThinGraph} shows that thin multigraphs
are sparse.
\begin{Lem}\label{Lem:Thin}
A thin $n$-vertex multigraph contains $O(n)$ edges.
\end{Lem}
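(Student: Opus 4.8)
The plan is to prove that a thin $n$-vertex plane multigraph has $O(n)$ edges via Euler's formula, carefully accounting for the absence of bigons. Recall that for a connected plane multigraph with $n$ vertices, $m$ edges, and $f$ faces, Euler's formula gives $n - m + f = 2$. The only obstacle to bounding $m$ linearly in $n$ in a general plane multigraph is the presence of faces bounded by very few edges — loops (faces of size $1$) and bigons (faces of size $2$). The thinness hypothesis forbids bigons, so first I would handle loops: if the multigraph contains a self-loop, that loop together with one of the two faces it bounds forms a degenerate configuration; I would argue we may assume without loss of generality there are no loops (a loop bounds a face of size $1$, and such a face being a bigon-like degeneracy can be excluded, or loops can be suppressed since they are irrelevant to the application), so that every face has size at least $3$.

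Next I would carry out the standard double-counting of edge-face incidences. Each edge lies on the boundary of exactly two faces (counting with multiplicity when an edge appears twice on the same face), so $\sum_{F} \ell(F) = 2m$, where $\ell(F)$ is the number of edge-sides on the boundary of face $F$. Since there are no loops and no bigons, every face satisfies $\ell(F) \geq 3$, hence $2m = \sum_F \ell(F) \geq 3f$, giving $f \leq \tfrac{2}{3}m$. Substituting into Euler's formula yields $2 = n - m + f \leq n - m + \tfrac{2}{3}m = n - \tfrac{1}{3}m$, so $m \leq 3(n-2) = O(n)$, as desired.

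The main subtlety — and the step I expect to require the most care — is justifying that we may assume connectedness and the absence of loops, and correctly interpreting ``face of size at least $3$'' for a multigraph where a single edge may border the same face twice. For disconnectedness, I would either apply the bound component-by-component and sum, or add $O(1)$ connecting edges per extra component (at most $n-1$ of them) without creating bigons by routing them appropriately; either way the $O(n)$ bound is preserved. For loops, I would note that in the intended application $G'$ arises by contracting subtrees of a simple graph, so loops (contracted tree edges) can simply be discarded, and in general a loop either is itself a ``face defined by two vertices'' degenerate case or can be removed since deleting it only decreases $m$ while keeping the graph thin. Once these reductions are in place, the Euler-formula counting above is routine and gives the linear bound.
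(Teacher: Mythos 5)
The paper never proves this lemma; it is imported as a black box from the reference \cite{ThinGraph} (Alber, Fellows, Niedermeier), so there is no in-paper proof to compare against. Your Euler-formula argument is exactly the standard way this sparsity bound is established: handle components separately (or note Euler's inequality $n-m+f\geq 2$ holds with the right sign when summed over components), double-count edge--face incidences, use that every face has at least three edge-sides, and conclude $m\leq 3n-6$. That part is correct and is essentially the proof in the cited reference. The one place your write-up is genuinely shaky is the treatment of loops: saying ``deleting a loop only decreases $m$ and keeps the graph thin'' bounds the number of non-loop edges but says nothing about how many loops there were, and under the paper's literal definition of a bigon (a face on \emph{two} vertices) a single vertex with many nested loops is ``thin'' with unboundedly many edges, so some convention is unavoidable. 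The clean fix is either to observe that the multigraphs to which the lemma is applied can be taken loopless (in the paper, $G'$ comes from contracting subtrees of a simple graph and loops can be discarded), or to read ``thin'' as forbidding all faces with fewer than three edge-sides, in which case your counting $2m=\sum_F\ell(F)\geq 3f$ goes through verbatim even in the presence of loops and no deletion step is needed. With that adjustment the proof is complete.
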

Let $G''$ be the thin multigraph obtained from $G'$ by identifying the two edges of a bigon with one edge and repeating this
process until no bigons exist. Graph $G''$ is turned into a simple graph $G'''$ by subdividing each edge $(u,v)$ into two edges
$(u,w)$ and $(w,v)$. Note that $(u,v)$ corresponds to $O(\sqrt r)$ edges in $G'$ and in $G$; we subdivide each of them similarly
such that the sum of weights of each edge pair equals the weight of the edge they subdivide. Now, $G'$ is a simple graph and we
colour black those vertices of $G'$ that correspond to contracted trees in $G$. All other vertices of $G'$ are coloured white.

By Lemma~\ref{Lem:Thin}, $G'''$ is a simple planar graph of size $O(n/\sqrt r)$ so we can find an $r$-division of it in
$O((n/\sqrt r)\log n)$ time with Lemma~\ref{Lem:rDiv}. This $r$-division consists of $O(n/r^{3/2})$ pieces each of
size $O(r)$. We get an induced division of $G'$ into pieces each consisting of $O(r)$ black vertices and $O(r^{3/2})$ white
vertices. Furthermore, each piece in this division has $O(\sqrt r)$ black boundary vertices and $O(r)$ white boundary vertices.

Let $\mathcal P_1$ be the set of subtrees of $G$ defined by the expanded black boundary vertices of pieces in the division of $G'$.
Note that $|\mathcal P_1| = O(n/r)$ and each subgraph in $\mathcal P_1$ has size $O(\sqrt r)$.

For each piece $P$ in the division of $G'$, let $P'$ be the piece in $G$ defined by the union of edges in $P$ and
subtrees from expanded black interior vertices of $P$. Let $\mathcal P_2$ denote the set of these pieces $P'$. Note
that $|\mathcal P_2| = \Theta(n/r^{3/2})$ and each piece $P'$ in $\mathcal P_2$ has size $O(r^{3/2})$. Furthermore, since there are
$O(r)$ white boundary vertices in $P'$ and each of the $O(\sqrt r)$ black boundary vertices of the corresponding piece
in $G'$ contributes with at most $O(\sqrt r)$ boundary vertices to $P'$ when expanded, $P'$ has $O(r)$ boundary vertices.

The pieces in $\mathcal P_1\cup\mathcal P_2$ cover all edges in $G$ and each edge is contained in exactly one piece.
We will transform these pieces into an $r$-division of $G$.

First consider pieces $P\in\mathcal P_1$. The number of boundary vertices of $P$ is bounded by the size $O(\sqrt r)$ of $P$.
Since $P$ is a tree, all its boundary vertices are on the external face. Hence, $P$ has no holes and we include it as part of
the $r$-division of $G$.

Now, consider pieces $P\in\mathcal P_2$. The piece $P'$ in the division of $G'$ corresponding to $P$ has a constant number
of holes. We claim that the same holds for $P$. For consider some black boundary vertex $v$ in $P'$ and let $T_v$ be the tree
in $G$ obtained by expanding $v$. In $P$, $v$ gets expanded into $O(\sqrt r)$ boundary vertices all belonging to $T_v$. Since
none of the edges of $T_v$ belong to $P$ by definition, these boundary vertices must all be on the same face of $P$; see
Figure~\ref{fig:Expand}.
\begin{figure}
\centerline{\scalebox{0.6}{\input{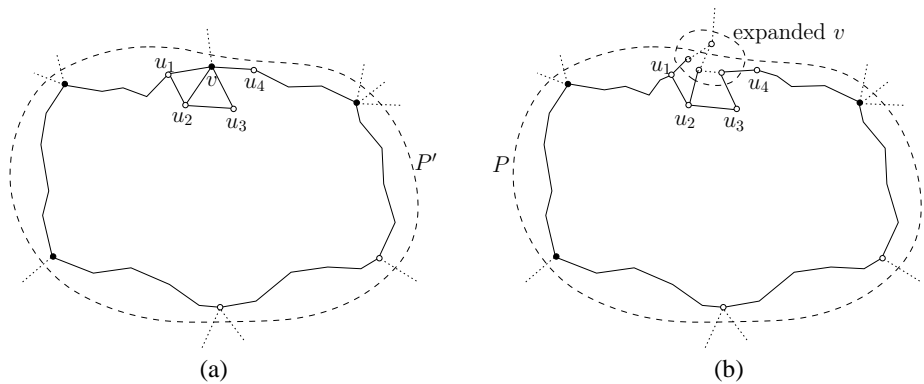}}}
\caption{(a): A piece $P'$ in the $r$-division of $G'$ with a black boundary vertex $v$ on the external face. (b): After expanding
         $v$ to a tree when forming a corresponding piece $P\in\mathcal P_2$, new boundary vertices will also be on the
         external face. The same is true for holes. Only solid edges are part of the pieces.}
\label{fig:Expand}
\end{figure}
Repeating this argument
for all black boundary vertices of $P'$, it follows that $P$ has a constant number of holes.

Since $P$ has size $O(r^{3/2})$ and $O(r)$ boundary vertices, we can find an $r$-division of it in $O(r^{3/2}\log r)$ time using
Lemma~\ref{Lem:rDiv} with a small modification: when finding a weak $r$-division of $P$, the boundary vertices and the holes of
$P$ will be regarded as boundary vertices and holes in the initial graph $P$. The result will still be a weak $r$-division of $P$
since the total number of boundary vertices will be $O(|P|/\sqrt r + r) = O(r) = O(|P|/\sqrt r)$.

Total time to find $r$-divisions over all $P\in\mathcal P_2$ is $O(n\log r)$. Taking the
union of the pieces obtained in all these $r$-divisions together with the pieces in $\mathcal P_1$, we obtain the $r$-division
of $G$ in $O(n\log r + (n/\sqrt r)\log n)$ time. This proves Theorem~\ref{Thm:rDiv}.

\section{Concluding Remarks}\label{sec:ConclRem}
We showed how to compute a min $st$-cut of a planar undirected $n$-vertex graph in $O(n\log\log n)$ time, improving on an earlier
$O(n\log n)$ bound. Can we get linear running time? Does a matching time bound also hold for planar directed graphs and for the
maximum $st$-flow problem in planar (directed or undirected) graphs?

The multiple source shortest path algorithm of Klein~\cite{MultiSrc} has been generalized to bounded genus graphs~\cite{MultiSrcGenus}.
Fakcharoenphol and Rao~\cite{Fakcharoenphol} mention that their algorithm might apply to such graphs as well. It is therefore
natural to conjecture that our time bound also holds for bounded genus graphs.

\section*{Acknowledgments}
I wish to thank Sergio Cabello for his comments and remarks and for making corrections to an earlier version of this paper.

\end{document}